\font\bb=msbm10 scaled 1200
\font\tenscr=rsfs10 scaled 1200
\newcommand{\de}{\partial}
\newcommand{\frin}[3]{{\left( \frac{#1}{#2} \right)}^{#3}}
\newcommand{\rf}[1]{(\ref{#1})} 
\newcommand{\nonu}{\nonumber}
\newcommand{\R}{\mbox {\bb R}}
\newcommand{\scal}[3]{{#1}_{#2}^{(#3)}}
\newcommand{\dst}{\displaystyle}
\newcommand{\Bs}{\bm \sigma}
\newcommand{\BE}{\bm E}
\newcommand{\Bj}{\bm j}
\newcommand{\I}{\bm I}
\newcommand{\G}{\mbox {\tenscr G}}
\newcommand{\M}{{\bm M}_{h}}
\newcommand{\se}{\sigma_{ex}}
\newcommand{\re}{\mbox{Re\,}}
\newcommand{\im}{\mbox{Im\,}}
\newcommand{\omo}{\omega_1}
\newcommand{\omt}{\omega_2}
\newcommand{\te}{\vartheta}
\newcommand{\tr}{{\sf T}}
\newcommand{\ui}{u_{in}}
\newcommand{\ue}{u_{ex}}
\renewcommand{\se}{\sigma_{ex}}
\newcommand{\si}{\sigma_{in}}
\newcommand{\sxe}{\sigma_x^\ast}
\newcommand{\sxye}{\sigma_{xy}^\ast}
\newcommand{\sye}{\sigma_y^\ast}
\newcommand{\x}{{\bm x}}
\renewcommand{\Pi}{\Phi_{in}}
\newcommand{\Pe}{\Phi_{ex}}
\renewcommand{\P}{P_{m,n}}
\renewcommand{\gamma}{\alpha}
\renewcommand{\leq}{\leqslant}
\renewcommand{\geq}{\geqslant}
\renewcommand{\iint}{\int}
\renewcommand{\phi}{\varphi}
\renewcommand{\theequation}{\thesection.\arabic{equation}}
\newtheorem{theorem}{Theorem}
\begin{document}

\title{The effective conductivity of a periodic lattice of circular inclusions}
\author{Yuri A. Godin}
\affiliation{
$^1$Department of Mathematics and Statistics,
University of North Carolina at Charlotte,
Charlotte, NC 28223, U.S.A.}
\email{ygodin@uncc.edu}

\date{\today}

\begin{abstract}
We determine the effective conductivity of a two-dimensional composite consisting
of a doubly periodic array of identical circular cylinders within a homogeneous matrix. 
The problem is reduced to the solution of an infinite system of linear equations.
The effective conductivity tensor is obtained in the form of the series expansion in terms of
the volume fraction of the cylinders whose coefficients are determined exactly.
Results are illustrated by examples.

\end{abstract}

\pacs{05.60.Cd, 41.20.Cv, 72.80.Tm, 78.20.Bh, 72.10.-d,  72.15.-v, 77.22.Ch}

\keywords{effective conductivity, overall properties, composite material, periodic media,
homogenization}
\maketitle

\section{Introduction}
\renewcommand{\theequation}{\arabic{equation}}

We study the effective conductivity tensor of a two-dimensional composite consisting of a periodic array of circular cylinders embedded in a host matrix. 
The problem has been studied before by Rayleigh \cite{R:92} for the case of a square 
array of cylinders.
His method was extended \cite{PMM:79,Mc:86} for regular arrays of circular cylinders. 
A method of functional equations \cite{Mi:97, Ryl:00} employing analytic functions was used to find an expression of the conductivity tensor for small volume fraction of inclusions. For rectangular lattice of inclusions an efficient method based on the use of elliptic functions was suggested \cite{BK:01}, which is developed further in the present paper. Effective conductivity can be also evaluated numerically \cite{M:04}.

The goal of this work is to find an analytic expression for the effective conductivity tensor in the case of an arbitrary doubly periodic array of circular cylinders when the effective tensor can be anisotropic. 
The solution of the problem consists of two steps. First, we construct a quasiperiodic potential
using a combination of Weierstrass $\zeta$-function and its derivatives. That ensures periodicity of the electric field in the whole plane and, as a result, avoids the problem of summation of a conditionally convergent series. This approach is similar to that in Ref.~\onlinecite{GF:70}. applied to biharmonic problems of the theory of elasticity. We reduce the problem to an infinite system of linear equations and find its solution in the form of a convergent power series (in terms of a parameter proportional to the volume fraction of the cylinders) whose coefficients are determined explicitly.  
Second, we determine the average electric field and the current density within one parallelogram of periods and find an exact expression of the effective conductivity tensor that relates the two quantities.

\section{Derivation of periodic potential}

Suppose that a periodic lattice of identical circular inclusions of radius $a$
is introduced into a uniform electric field $\BE = [E_x,E_y]^\tr$ applied in the plane perpendicular to the cylinder axes. The nodes of the lattice in the complex plane are generated 
by a pair of vectors $2\omo$ and $2\omt$, $\im \frac{\omt}{\omo} > 0$ (see Figure~\ref{fig1}). In polar coordinates the potential $u(r,\te)$ has the following properties: 
\begin{align}
 \Delta u =0, \quad u = \left\{
 \begin{array}{l}
  \ui \text{ in the inclusion}, \\[2mm]
  \ue \text{ in the medium},
 \end{array}
\right.
\end{align}
and on the boundary $r=a$
\begin{align}
 \ui &= \ue \label{bc1}, \\[2mm]
 \si\, \frac{\de \ui}{\de r} &= \se \,\frac{\de \ue}{\de r}, \label{bc2}
\end{align}
where $\si$ and $\se$ are the electric conductivity of the inclusions and the medium, respectively.

It is convenient to represent potential $u$ in the complex form $u = \re \,\Phi(z)$ with
\begin{align}
\label{Phi_in}
\Pi (z) &= Ea \sum_{n=0}^{\infty} \left( A_n + iB_n \right) 
\left( \frac{z}{a} \right)^{2n+1}, \quad |z| < a, \\
\Pe (z) &= \left( -E_x + iE_y \right) z 
+ Ea \sum_{n=0}^{\infty} \frac{a^{2n+1}}{(2n)!} 
\left( C_n + iD_n \right) \zeta^{(2n)}(z), \quad |z| > a,
\label{Phi_ex}
\end{align}
where $A_n, \; B_n,\; C_n,\;$ and $D_n$ are unknown real dimensionless coefficients, $E = |{\bm E}|$, and $\zeta(z)$ is Weierstrass' $\zeta$-function 
\begin{equation}
\zeta (z) = \frac{1}{z} + {\sum_{m,n}}^{\prime} \left[ \frac{1}{z-\P} + \frac{1}{\P}
+ \frac{z}{\P^2} \right].
\label{zeta} 
\end{equation}
Here $\zeta^{(2n)}(z)$ denotes derivative of order $2n$, and $\P = 2m\omo + 2n\omt$.
Prime in the sum means that summation is extended over all pairs $m,\,n$ except $m=n=0$.

\begin{figure}
\includegraphics[width=0.4\textwidth, angle=0]{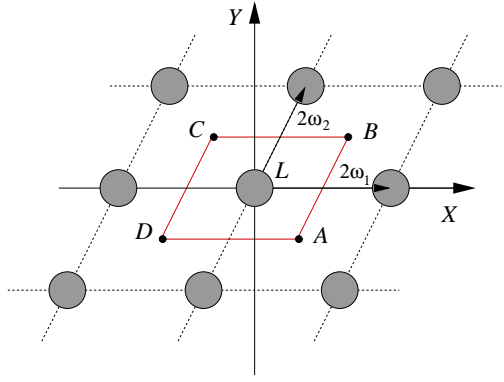}
\caption{\label{fig1} Circular inclusions of radius $a$ arranged in a
periodic lattice with periods $2\omo$ and $2\omt$.} 
\end{figure}
\noindent
Below we will use some properties of $\zeta$-function \cite{AS:65,BE:53}.
$\zeta$-function is an odd meromorphic function with simple poles at $\P$. 
It has the quasiperiodicity property
\begin{align}
 \label{q1}
 \zeta(z+2\omo) &= \zeta(z) + 2\eta_1, \quad \eta_1=\zeta(\omo), \\[2mm]
 \zeta(z+2\omt) &= \zeta(z) + 2\eta_2, \quad \eta_2=\zeta(\omt),
 \label{q2}
\end{align}
where constants $\eta_1$ and $\eta_2$ are related by the Legendre identity
\begin{equation}
 \eta_1 \,\omt - \eta_2 \,\omo = \frac{\pi i}{2}.
 \label{legendre}
\end{equation}
Derivative of $\zeta(z)$ is a periodic function and is expressed through 
Weierstrass elliptic function $\wp(z)$  by
\begin{equation}
 \zeta^\prime (z) = -\wp(z).
\end{equation}
This property ensures the electric field to be periodic in the medium, while
\rf{q1}-\rf{q2} guarantee that the potential changes by a constant value
in the direction of either $2\omo$ or $2\omt$.

Function $\wp(z)$ satisfies the differential equation
\begin{equation}
 \left[\frac{d \wp(z)}{dz}\right]^2 = 4\wp^3(z) -g_2 \,\wp(z) -g_3,
\end{equation}
where $g_2$ and $g_3$ are two invariants defined by
\begin{equation}
 g_2 = 60{\sum_{n,m}}^\prime \frac{1}{\P^{\,4}}, \quad
 g_3 = 140{\sum_{n,m}}^\prime \frac{1}{\P^{\,6}},
\end{equation}
which are used for numerical evaluation of $\wp(z)$ and $\zeta(z)$. In particular,
we will use the following homogeneity property 
\begin{equation}
 \zeta(z;g_2,g_3) = \frac{1}{\ell}\, \zeta \left( z \ell^{-1}; g_2 \ell^4, g_3 \ell^6 \right).
 \label{homo}
\end{equation}

To satisfy conditions \rf{bc1}-\rf{bc2} on the inclusion surface we expand $\zeta(z)$ 
and its even derivatives in a Laurent series 
\begin{align} 
 \zeta^{(2n)}(z) &= \frac{(2n)!}{z^{2n+1}}- \sum_{k=0}^\infty s_{n+k+1}\,
 \frac{(2n+2k+1)!}{(2k+1)!}\,z^{2k+1},
  \quad n \geq 0, \quad s_1=0,
\end{align}
where 
\begin{equation}
 s_k = {\sum_{n,m}}^\prime \frac{1}{\P^{\,2k}}, \quad k = 2,3, \ldots.
 \label{sk}
\end{equation}
Sums \rf{sk} contain only even powers of $\P$ since for every point
$\P = 2m\omo + 2n\omt$ on the lattice there exists symmetric point
$-\P$ and the sums with odd powers vanish. Also, if the periods $2\omo$ and
$2\omt$ of the lattice are fixed, sums $s_k$ remain bounded as $k \to \infty$. 
Thus, potential $u(r,\te)$ near the inclusion surface has the form
\begin{widetext}
\begin{align}
 \label{ui}
 \ui &= Ea\sum_{n=0}^\infty \left(A_n \cos[(2n+1)\te] - B_n \sin [(2n+1)\te ]\right)
 \left( \frac{r}{a} \right)^{2n+1}, \\
 \ue &= -(E_x \cos \te + E_y \sin \te)r + Ea\sum_{n=0}^\infty \frin{a}{r}{2n+1}\Bigl[ C_n \cos(2n+1)\te 
 + D_n \sin (2n+1)\te\Bigr] \nonu \\
 &-Ea \sum_{n,k=0}^\infty  \frac{(2n+2k+1)!}{(2k)!\,(2n+1)!}\,a^{2k+1}  r^{2n+1} 
  \Bigl[\bigl( C_k \cos(2n+1)\te 
  - D_k \sin (2n+1)\te\bigr)s^R_{n+k+1} \nonu \\
 &-\bigl( C_k \sin(2n+1)\te + D_k \cos (2n+1)\te\bigr)s^I_{n+k+1} \Bigr],
\end{align}
\end{widetext}
where $s^R_k$ and $s^I_k$ denote the real and imaginary parts of the sum $s_k$, respectively.

From \rf{bc1}-\rf{bc2} we obtain relations between the coefficients
\begin{align}
\label{An}
A_n &= -\frac{2\se}{\si - \se}\, C_n, \\[2mm]
\label{Bn}
B_n &= \frac{2\se}{\si - \se}\, D_n,
\end{align}
and a system for determining $C_n$ and $D_n$
\begin{widetext}
\begin{align}
 C_n -\gamma \sum_{m=0}^\infty \frac{(2n+2m+1)!}{(2m)!(2n+1)!} \left( C_m s^R_{n+m+1}
 -D_m s^I_{n+m+1}\right)a^{2n+2m+2} &= \gamma \,\frac{E_x}{E}\,\delta_{n,0}, \\[2mm]
 D_n -\gamma \sum_{m=0}^\infty \frac{(2n+2m+1)!}{(2m)!(2n+1)!} \left( -C_m s^I_{n+m+1}
 -D_m s^R_{n+m+1}\right)a^{2n+2m+2} &= \gamma \,\frac{E_y}{E}\,\delta_{n,0},
\end{align}
\end{widetext}
with $\dst \gamma = \frac{\si - \se}{\si + \se}$ and $\delta_{n,0}$ being the Kronecker delta.

Let us introduce the following notation:
\begin{align}
 \ell &= \min \{2|\omo|, 2|\omt|\}, \\[2mm]
 h &=\frac{a}{\ell}, \quad h\leq\frac{1}{2}, \\[2mm]
 S_k &= {\sum_{n,m}}^\prime \frin{\ell}{\P}{2k}=S_k^R + iS_k^I, \; k=2,3,\ldots, \; S_1 = 0.
 \label{gk}
\end{align}
Here $S_k^R$ and $S_k^I$ denote the real and imaginary part of $S_k$, respectively.
Note that once $S_2$ and $S_3$ are computed, a recurrence relation \cite{AS:65, BE:53} allows to evaluate higher order terms.
The system then can be written as
\begin{widetext}
\begin{align}
 \label{Cn}
 C_n &-\gamma \sum_{m=0}^\infty \frac{(2n+2m+1)!}{(2m)!(2n+1)!} \left( C_m S^R_{n+m+1}
 -D_m S^I_{n+m+1}\right)h^{2n+2m+2} = \gamma \,\frac{E_x}{E}\,\delta_{n,0}, \\[2mm]
 \label{Dn}
 D_n &-\gamma \sum_{m=0}^\infty \frac{(2n+2m+1)!}{(2m)!(2n+1)!} \left( -C_m S^I_{n+m+1}
 -D_m S^R_{n+m+1}\right)h^{2n+2m+2} = \gamma \,\frac{E_y}{E}\,\delta_{n,0},
\end{align}
\end{widetext}
or in vector form
\begin{align}
 {\bm x}_n - \sum_{m=0}^\infty {\bm G}_{n,m} \,{\bm x}_m h^{2n+2m+2} = {\bm y} \,\delta_{n,0},
 \label{vec_sys}
\end{align}
where
\begin{align}
 {\bm x}_n &= \left[
 \begin{array}{c}
  C_n \\[1mm]
  D_n
 \end{array}
\right],\quad
{\bm y} = \frac{\gamma}{E}\,\left[
 \begin{array}{c}
  E_x \\[1mm]
  E_y
 \end{array}
\right],\\[2mm]
 {\bm G}_{n,m} &= \gamma \,\frac{(2n+2m+1)!}{(2n+1)!(2m)!} \left[
 \begin{array}{rr}
  S^R_{n+m+1} & -S^I_{n+m+1} \\[1mm]
  -S^I_{n+m+1} & -S^R_{n+m+1}
 \end{array}
\right], \quad
 {\bm G}_{0,0} ={\bm 0}.
 \label{Gnm}
\end{align}
In Appendix we formulate conditions providing existence and  uniqueness of the solution  
of the system \rf{vec_sys} in the space of bounded sequences, as well as the possibility 
to obtain its solution by truncation or in the form of a convergent power series in $h$. 
The latter method is used below.

Let us look for solution of \rf{vec_sys} in the form of a power series in $h$
\begin{equation}
 {\bm x}_n = {\bm y} \,\delta_{n,0} + \sum_{m=0}^\infty {\bm p}_{n,m} h^{2n+2m+2}.
 \label{ser}
\end{equation}
Substituting \rf{ser} into \rf{vec_sys} and equating the coefficients of like powers of $h$
we obtain a recurrence relation for ${\bm p}_{n,m}$
\begin{align}
 {\bm p}_{n,0} &= {\bm G}_{n,0}\, {\bm y}, \\[2mm]
 {\bm p}_{n,k} &= \sum_{m=0}^{\left[ \frac{k-1}{2} \right]}{\bm G}_{n,m}\, {\bm p}_{m,k-2m-1},\quad k = 1,2,\ldots,
\end{align}
where $[s]$ denotes the integer part of $s$. Below we give several first terms of series expansion of
${\bm x}_0$ which are needed for calculation of the effective properties
\begin{widetext}
\begin{align}
{\bm x}_{0} &= {\bm y} + {\bm G}_{0,1}{\bm G}_{1,0}\,{\bm y}\, h^8
                      + {\bm G}_{0,2}{\bm G}_{2,0}\,{\bm y}\, h^{12}
                      + {\bm G}_{0,1}{\bm G}_{1,1}{\bm G}_{1,0}\,{\bm y}\, h^{14}
                      + \left(\left( {\bm G}_{0,1}{\bm G}_{1,0}\right)^2
                      + {\bm G}_{0,3}{\bm G}_{3,0}\right){\bm y}\, h^{16} \nonu\\
                      &+ \left( {\bm G}_{0,1}{\bm G}_{1,2}{\bm G}_{2,0}
                      + {\bm G}_{0,2}{\bm G}_{2,1}{\bm G}_{1,0}\right){\bm y}\, h^{18} \nonu \\
  &+ \left( {\bm G}_{0,1}{\bm G}_{1,0}{\bm G}_{0,2} {\bm G}_{2,0}
  + {\bm G}_{0,1}{\bm G}_{1,1}{\bm G}_{1,0} + 
   {\bm G}_{0,2}{\bm G}_{2,0}{\bm G}_{0,1} {\bm G}_{1,0}
   +  {\bm G}_{0,4}{\bm G}_{4,0}\right){\bm y}\, h^{20} + O(h^{22})
\label{x_0_series}
\end{align}
\end{widetext}
or
\begin{align}
%  \left[
%  \begin{array}{l}
%   C_0 \\
%   D_0
%  \end{array}
%  \right] &= \frac{\gamma}{E}\, \M
%  \left[
%  \begin{array}{l}
%   E_x \\
%   E_y
%  \end{array}
%  \right],
{\bm x}_{0} = \M\, {\bm y},
 \label{C0D0}
\end{align}
where
\begin{widetext}
\begin{align}
 \M &=  \left(1 + 3\gamma^2 |S_2|^2 h^8
 + 5\gamma^2 |S_3|^2  h^{12} \right) \I \nonu \\[2mm]
 &+30\gamma^3
 \left[
 \begin{array}{rr}
   S_3^R\left( [S_2^R]^2 - [S_2^I]^2 \right) + 2S_2^R S_2^I S_3^I &
   S_3^I\left( [S_2^R]^2 - [S_2^I]^2 \right) - 2S_2^R S_2^I S_3^R \\[2mm]
   S_3^I\left( [S_2^R]^2 - [S_2^I]^2 \right) - 2S_2^R S_2^I S_3^R &
   -S_3^R\left( [S_2^R]^2 - [S_2^I]^2 \right) - 2S_2^R S_2^I S_3^I
 \end{array}
 \right] h^{14} \nonu \\[2mm]
 &+ \left(9\gamma^4 |S_2|^4 + 7\gamma^2 |S_4|^2 \right) h^{16} \I \nonu \\[2mm]
 +210&\gamma^3
 \left[
 \begin{array}{rr}
   S_2^R\left( S_3^R S_4^R + S_3^I S_4^I \right) + S_2^I \left( S_3^R S_4^I - S_3^I S_4^R\right) &
   -S_2^I\left( S_3^R S_4^R + S_3^I S_4^I \right) + S_2^R \left( S_3^R S_4^I - S_3^I S_4^R\right)\\[2mm]
    -S_2^I\left( S_3^R S_4^R + S_3^I S_4^I \right) + S_2^R \left( S_3^R S_4^I - S_3^I S_4^R\right)&
   S_2^I \left( S_3^I S_4^R - S_3^R S_4^I\right) - S_2^R\left( S_3^R S_4^R + S_3^I S_4^I \right)
 \end{array}
 \right] h^{18} \nonu \\[2mm]
 &+\Biggl(
 30\gamma^3
 \left[
 \begin{array}{rr}
   S_3^R\left( [S_2^R]^2 - [S_2^I]^2 \right) + 2S_2^R S_2^I S_3^I &
   S_3^I\left( [S_2^R]^2 - [S_2^I]^2 \right) - 2S_2^R S_2^I S_3^R \\[2mm]
   S_3^I\left( [S_2^R]^2 - [S_2^I]^2 \right) - 2S_2^R S_2^I S_3^R &
   -S_3^R\left( [S_2^R]^2 - [S_2^I]^2 \right) - 2S_2^R S_2^I S_3^I
 \end{array}
 \right] \nonu \\[2mm]
 &+ \left( 30\gamma^4 |S_2|^2 |S_3|^2 + 9\gamma^2 |S_5|^2 \right) \I \Biggr) h^{20}
 + O(h^{22}),
 \label{M}
\end{align}
\end{widetext}
and $\I$ is the identity matrix.

Matrix $\M$ can also be found in the form of a series expansion in $\gamma$. To this end, we rewrite
\rf{vec_sys} as
\begin{align}
 {\bm x}_n - \gamma\sum_{m=0}^\infty \widetilde{\bm G}_{n,m} \,{\bm x}_m  = \gamma \widetilde{\bm y} \,\delta_{n,0},
 \label{vec_sys1}
\end{align}
where
\begin{align}
 \label{Gtnm}
 \widetilde{\bm G}_{n,m} &=\gamma^{-1} {\bm G}_{n,m} h^{2n+2m+2}, \\[2mm]
 \widetilde{\bm y} &=\gamma^{-1} {\bm y},
\end{align}
or in operator form (see Appendix for notation)
\begin{equation}
 \left( \I -\gamma \widetilde{\bm G} \right) {\bm x} = \gamma \widetilde{\bm y}.
\end{equation}
If $\| \gamma \widetilde{\bm G} \| <1$, solution of this equation can be represented as a series
\begin{align}
 {\bm x} = \gamma \left( \I -\gamma \widetilde{\bm G} \right)^{-1} \widetilde{\bm y}
 = \sum_{n=0}^\infty \gamma^{n+1} \widetilde{\bm G}^{\,n}\, \widetilde{\bm y}.
\end{align}
From here we obtain expansion for ${\bm x}_{0}$
\begin{align}
 {\bm x}_{0} &= \gamma \widetilde{\bm y} +\gamma^3 \sum_{k=1}^\infty \widetilde{\bm G}_{0,k}
 \widetilde{\bm G}_{k,0} \widetilde{\bm y} + \gamma^4 \sum_{n,k=1}^\infty \widetilde{\bm G}_{0,n}
 \widetilde{\bm G}_{n,k} \widetilde{\bm G}_{k,0}\, \widetilde{\bm y} \nonu \\[2mm]
 &+ \gamma^5 \sum_{m,n=1}^\infty \sum_{k=0}^\infty \widetilde{\bm G}_{0,n}
 \widetilde{\bm G}_{n,k} \widetilde{\bm G}_{k,m} \widetilde{\bm G}_{m,0}\,\widetilde{\bm y}+O(\gamma^6).
\end{align}
Evaluating the second term using \rf{Gnm} and \rf{Gtnm} and comparing this expression with 
\rf{C0D0} we derive an alternative expression for $\M$
\begin{align}
 \M &= \I +\gamma^2 \I \sum_{k=1}^\infty (2k+1)|S_{k+1}|^2 h^{4k+4} + 
 \gamma^3 \sum_{n,k=1}^\infty \widetilde{\bm G}_{0,n}
 \widetilde{\bm G}_{n,k} \widetilde{\bm G}_{k,0} \nonu \\[2mm]
 &+ \gamma^4 \sum_{m,n=1}^\infty \sum_{k=0}^\infty \widetilde{\bm G}_{0,n}
 \widetilde{\bm G}_{n,k} \widetilde{\bm G}_{k,m} \widetilde{\bm G}_{m,0}
 + O(\gamma^5).
\end{align}
If all lattice sums \rf{gk} are real then expression for $\M$ is reduced to
\begin{align}
 \M &= \I +\gamma^2 \I \sum_{k=1}^\infty (2k+1)S_{k+1}^2 h^{4k+4} \nonu \\[2mm]
 &+ \gamma^3 \sum_{n,k=1}^\infty \frac{(2n+2k+1)!}{(2n)!(2k)!}\,  S_{n+1}S_{n+k+1}S_{k+1}
 \left[
 \begin{array}{rr}
  1 & 0 \\[2mm]
   0 & -1
 \end{array}
 \right] 
 h^{4n+4k+6}
  \nonu \\[2mm]
 &+ \gamma^4 \I\sum_{m,n=1}^\infty \sum_{k=0}^\infty\frac{(2n+2k+1)!(2k+2m+1)!}{(2n)!(2k)!(2k+1)!(2m)!}\,
  S_{n+1}S_{n+k+1}S_{m+k+1}S_{m+1}h^{4m+4n+4k+8} \nonu\\[2mm]
 &+ O(\gamma^5).
\end{align}

\section{Average field}

Let us find the average electric field $\langle \BE \rangle$ in the parallelogram
$ABCD$ in Figure \ref{fig1} of area $S= 4\omo \im \omt$. 
\begin{equation}
 \langle \BE \rangle = \frac{1}{S} \iint_S \BE\,dS = \frac{1}{S} \iint_{S_{in}} \BE_{in}\,dS +
 \frac{1}{S} \iint_{S_{ex}} \BE_{ex}\,dS,
 \label{E_ave}
\end{equation}
where $S_{in} = \pi a^2$ is the area of inclusion's cross-section and $S_{ex} = S - S_{in}$.
From \rf{ui} and \rf{An} we find the average field in the inclusion
in Cartesian coordinates
\begin{align}
 \langle \BE_{in} \rangle &= \frac{1}{S} \iint_{S_{in}} \BE_{in}\,dS 
 = \frac{\pi a^2 E}{S}\, [-A_0, B_0]^\tr  
 = \frac{2\pi a^2 \se E}{(\si - \se)S}\, [C_0, D_0]^\tr.
 %\label{E_in_ave}
\end{align}
To calculate the average field outside the inclusion, observe that  by Green's theorem
\begin{align}
 &\iint_{S_{ex}} \BE_{ex}\,dS 
 = -\iint_{S_{ex}} \left[\frac{\de \ue}{\de x}, \frac{\de \ue}{\de y}\right]^\tr dS \nonu \\[2mm]
 &= \oint_{ABCD} \left[ -\ue\,dy, \ue\,dx \right]^\tr - \oint_{L} \left[ -\ue\,dy, \ue\,dx \right]^\tr.
\end{align}
Using the quasiperiodicity property \rf{q1}-\rf{q2} of the $\zeta$-function, one can simplify the
integrals
\begin{align}
 &\oint_{ABCD} \ue\,dy = \int_A^B \ue\,dy + \int_C^D \ue\,dy 
 = \re\int_D^C\left( -\Pe(z) + \Pe (z+2\omo)\right)dy  \nonu \\
 &=\re\left(2\omo (-E_x + iE_y) + 2Ea^2 (C_0 + iD_0)\eta_1 \right) \im(2\omt).
\end{align}
\begin{align}
 &\oint_{ABCD} \ue\,dx =  \int_{A}^B \ue\,dx + \int_{B}^C \ue\,dx+ \int_{C}^D \ue\,dx 
 + \int_{D}^A \ue\,dx\nonu \\[2mm]
 &= \re\int_{D}^C\left( \Pe(z+2\omo) - \Pe (z)\right)dx 
 -\re\int_{D}^A\left( \Pe(z+2\omt) - \Pe (z)\right)dx \nonu \\[2mm]
 &=\re\left( (-E_x + iE_y)2\omo  + 2Ea^2 (C_0 + iD_0)\eta_1 \right)\re(2\omt) \nonu \\[2mm]
 &-\re\left( (-E_x + iE_y)2\omt  + 2Ea^2 (C_0 + iD_0)\eta_2 \right)2\omo.
\end{align}
\begin{widetext}
\begin{align}
 &\oint_L \ue\,dy = \re \oint_L \Pe(z)\,dy = \re \oint_L \Pi(z)\,dy \nonu \\[2mm]
 &= \re Ea^2 \sum_{n=0}^\infty \left(A_n +iB_n \right) \int_0^{2\pi} 
 \left[ \cos (2n+1)\te +i\sin (2n+1)\te \right] \cos \te d\te = \pi a^2 E A_0.
\end{align}
\end{widetext}
Similarly,
\begin{widetext}
\begin{align}
 &\oint_L \ue\,dx = \re \oint_L \Pe(z)\,dx = \re \oint_L \Pi(z)\,dx \nonu \\[2mm]
 &= -\re Ea^2 \sum_{n=0}^\infty \left(A_n +iB_n \right) \int_0^{2\pi} 
 \left[ \cos (2n+1)\te +i\sin (2n+1)\te \right] \sin \te d\te = \pi a^2 E B_0.
\end{align}
\end{widetext}
Thus, the average electric field $\langle \BE \rangle$ has the following components: 
\begin{widetext}
\begin{align}
 \label{Ex} 
 \langle \BE \rangle_x &= E_x - \frac{Ea^2}{\omo} \re \left[ \left( C_0 +iD_0 \right) \eta_1 \right]
 = E_x - \frac{Ea^2}{\omo} \left( C_0 \re \eta_1 - D_0 \im \eta_1\right), \\[2mm]
 \langle \BE \rangle_y &= E_y + Ea^2\,\frac{\re \omt \re \left[ \left( C_0 +iD_0 \right) \eta_1 \right] - \omo \re \left[ \left( C_0 +iD_0 \right) \eta_2 \right]}{\omo \im \omt} \nonu \\[2mm]
&= E_y + Ea^2\,\frac{C_0 \im \eta_1 \im \omt - D_0 \left( \frac{\pi}{2} - \re \eta_1 \im \omt \right) }{\omo \im \omt}, 
 \label{Ey}
\end{align}
\end{widetext}
where we have used the Legendre identity \rf{legendre}.

Using expression \rf{C0D0} for the coefficients $C_0$ and $D_0$ we rewrite
\rf{Ex}-\rf{Ey} in matrix form
\begin{equation}
\langle \BE \rangle = 
\left[
 \begin{array}{l}
  \langle \BE \rangle_x \\
  \langle \BE \rangle_y
 \end{array}
 \right]= \left\{
 \I - \frac{2a^2 \gamma}{S}\, {\bm \Psi}  \M
 \right\}
 \left[
 \begin{array}{l}
  E_x \\
  E_y
 \end{array}
 \right],
\end{equation}
where 
\begin{equation}
 {\bm \Psi} = \left[
 \begin{array}{rr}
  \re \eta_1 \im 2\omt & -\im \eta_1 \im 2\omt\\[2mm]
  -\im \eta_1 \im 2\omt & \pi -\re \eta_1 \im 2\omt
 \end{array}
 \right].
 \label{Psi}
\end{equation}
Calculation of the average electric field in the inclusion and in the medium then gives
\begin{align}
 \langle \BE_{in} \rangle &= \frac{2\pi a^2 \se}{(\si + \se)S}\, \M
 \left[
 \begin{array}{l}
  E_x \\
  E_y
 \end{array}
 \right]= \frac{2\pi a^2 \se E}{(\si - \se)S}\,\x_0, \\[2mm]
 \langle \BE_{ex} \rangle &= 
 \left\{
 \I -  \frac{2\pi a^2 \se}{(\si + \se)S}\, \M - \frac{2a^2 \gamma}{S} \,{\bm \Psi} \M
 \right\}
 \left[
 \begin{array}{l}
  E_x \\
  E_y
 \end{array}
 \right] = \BE - \langle \BE_{in} \rangle - \frac{2a^2 E}{S} \,{\bm \Psi} \x_0.
\end{align}

\section{Calculation of the effective conductivity}

Effective conductivity of an arbitrary periodic array of inclusions is a tensor ${\bm \sigma}^{\ast}$,
\begin{equation}
{\bm \sigma}^{\ast} = \left[
\begin{array}{cc}
\sxe & \sxye \\
\sxye & \sye 
\end{array} \right].
\end{equation}
It relates the average current density $\langle {\bm j} \rangle$ and the average electric field $\langle \BE \rangle$
\begin{equation}
\langle{\bm j}\rangle = {\bm \sigma}^{\ast} \langle{\bm E}\rangle \quad 
\mbox{\rm or} \quad \left\{
\begin{array}{ll} 
\langle {\bm j} \rangle_x = \sxe \langle \BE\rangle_x + \sxye \langle \BE\rangle_y, &\\[2mm]
\langle {\bm j} \rangle_y = \sxye \langle \BE\rangle_x + \sye \langle \BE\rangle_y. &
\end{array} \right.
%\quad {\bm E} = {\bm r}^{\ast} {\bm j},
\label{j}
\end{equation}
Observe that
\begin{align}
 \langle{\bm j}\rangle &= \frac{1}{S} \iint_S \Bj\,dS = \frac{1}{S} \iint_{S_{in}} \Bj_{in}\,dS +
 \frac{1}{S} \iint_{S_{ex}} \Bj_{ex}\,dS \nonu \\[2mm]
 &= \frac{\si}{S} \iint_{S_{in}} \BE_{in}\,dS +
 \frac{\se}{S} \iint_{S_{ex}} \BE_{ex}\,dS \nonu \\[2mm]
 &= \si \langle{\BE_{in}}\rangle + \se \langle{\BE_{ex}}\rangle.
\end{align}
To determine the effective conductivity, we apply first a unit electric field in the $x$-direction:
$\BE = [E_x, E_y]^\tr = [1,0]^\tr$. The corresponding averaged vectors are denoted by the superscript $(1,0)$.
\begin{align}
 \left[
 \begin{array}{c}
 \langle {\bm j} \rangle_x^{(1,0)} \\
\langle {\bm j} \rangle_y^{(1,0)}
\end{array} \right] = 
 \left[
\begin{array}{cc}
\sxe & \sxye \\
\sxye & \sye 
\end{array} \right]
\left[
 \begin{array}{c}
 \langle {\BE} \rangle_x^{(1,0)} \\
\langle {\BE} \rangle_y^{(1,0)}
\end{array} \right].
 \label{E10}
\end{align}
Similarly, we apply then the electric field in the $y$-direction $\BE = [E_x, E_y]^\tr = [0,1]^\tr$
and compute the averaged field and current density
\begin{align}
 \left[
 \begin{array}{c}
 \langle {\bm j} \rangle_x^{(0,1)} \\
\langle {\bm j} \rangle_y^{(0,1)}
\end{array} \right] = 
 \left[
\begin{array}{cc}
\sxe & \sxye \\
\sxye & \sye 
\end{array} \right] 
\left[
 \begin{array}{c}
 \langle {\BE} \rangle_x^{(0,1)} \\
\langle {\BE} \rangle_y^{(0,1)}
\end{array} \right].
\label{E01}
\end{align}
Equation \rf{E10}-\rf{E01} can be written as one matrix equation
\begin{align}
 \left[
\begin{array}{cc}
\sxe & \sxye \\
\sxye & \sye 
\end{array} \right]
\left[
 \begin{array}{cc}
 \langle {\BE} \rangle_x^{(1,0)} & \langle {\BE} \rangle_x^{(0,1)} \\
\langle {\BE} \rangle_y^{(1,0)} & \langle {\BE} \rangle_y^{(0,1)}
\end{array} \right] = 
\left[
 \begin{array}{cc}
 \langle {\bm j} \rangle_x^{(1,0)} & \langle {\bm j} \rangle_x^{(0,1)}\\
\langle {\bm j} \rangle_y^{(1,0)} & \langle {\bm j} \rangle_y^{(0,1)}
\end{array} \right].
\end{align}
Thus,
\begin{align}
 \left[
\begin{array}{cc}
\sxe & \sxye \\
\sxye & \sye 
\end{array} \right] 
 = 
\left[
 \begin{array}{cc}
 \langle {\bm j} \rangle_x^{(1,0)} & \langle {\bm j} \rangle_x^{(0,1)}\\
\langle {\bm j} \rangle_y^{(1,0)} & \langle {\bm j} \rangle_y^{(0,1)}
\end{array} \right]
\left[
 \begin{array}{cc}
 \langle {\BE} \rangle_x^{(1,0)} & \langle {\BE} \rangle_x^{(0,1)} \\
\langle {\BE} \rangle_y^{(1,0)} & \langle {\BE} \rangle_y^{(0,1)}
\end{array} \right]^{-1}.
\label{sig1}
\end{align}
Calculation of the electric field and the current density matrices gives
\begin{align}
 \left[
 \begin{array}{cc}
 \langle {\BE} \rangle_x^{(1,0)} & \langle {\BE} \rangle_x^{(0,1)} \\
\langle {\BE} \rangle_y^{(1,0)} & \langle {\BE} \rangle_y^{(0,1)}
\end{array} \right] 
&=\I - \frac{2 \gamma a^2}{S} \,{\bm \Psi} \M, \\[2mm]
 \left[
 \begin{array}{cc}
 \langle {\bm j} \rangle_x^{(1,0)} & \langle {\bm j} \rangle_x^{(0,1)}\\
\langle {\bm j} \rangle_y^{(1,0)} & \langle {\bm j} \rangle_y^{(0,1)}
\end{array} \right] 
&=\se  \I + \frac{2 \gamma a^2 \se}{S} \left(\pi \I -\bm \Psi\right) \M.
\end{align}
Substituting these expressions in \rf{sig1} we obtain the effective conductivity tensor
\begin{equation}
 \Bs^\ast = 
 %\se \left(\I+\delta (\pi \I - {\bm \Psi}) \M \right)\left(\I-\delta {\bm \Psi} \M \right)^{-1} =
 \se \left(\I+\pi \delta  \M \left(\I-\delta {\bm \Psi} \M \right)^{-1} \right),
\end{equation}
where
\begin{align}
 \delta  = \frac{2 \gamma a^2}{S}
 \label{delta}
\end{align}
is proportional to the fractional part of the inclusions.
If $\| \delta {\bm \Psi} \M \| < 1$ then $\dst \left(\I-\delta {\bm \Psi} \M \right)^{-1}$ can be
expanded in a convergent series
\begin{align}
 \Bs^\ast &= 
\se \Bigl(\I+\pi \delta \sum_{n=0}^\infty \delta^n \M \left({\bm \Psi} \M \right)^{n} \Bigr).
\label{sig2}
\end{align}
Thus, for a lattice with periods $2\omo$ and $2\omt$ (see Figure~\ref{fig1}) the expansion of the conductivity tensor in terms of volume fraction $f$ of inclusions has the form
\begin{align}
 \sigma_x^\ast &=  \se \left(1 + 2\gamma f + \frac{4\gamma^2 f^2}{\pi}\, \re \zeta(\omo) \im 2\omt
 + \frac{8\gamma^3 f^3}{\pi^2}\, | \zeta(\omo)|^2 \im^{\!2} 2\omt +O(f^4) \right), \\[2mm]
 \sigma_{xy}^\ast &= -\frac{4\se \gamma^2 f^2}{\pi}\, \im \zeta(\omo) \im 2\omt \left(1+2\gamma f
   +O(f^2) \right), \\[2mm]
 \sigma_y^\ast &= \se \left(1 + 2\gamma f + \frac{4\gamma^2 f^2}{\pi}\,\left(\pi- \re \zeta(\omo) \im 2\omt\right) \right. \nonu \\[2mm]
 &+ \left. \frac{8\gamma^3 f^3}{\pi^2}\,\left(\pi^2 -2\pi \re \zeta(\omo) \im 2\omt + | \zeta(\omo)|^2 \im^{\!2} 2\omt \right) +O(f^4)  \right),
\end{align}
where $\dst \gamma = \frac{\si - \se}{\si + \se}$.
In what follows we use series \rf{sig2} for analytic expression of the effective conductivity tensor for specific lattices.

\section{Effective conductivities of some lattices}

\subsection{Square lattice}
For the square lattice we put $2\omo = \ell,\; 2\omt=i\ell$ (see Figure \ref{fig2}). 
Then one can find \cite{AS:65}  that
\begin{align}
 \eta_1 = \frac{\pi}{2\ell}, \quad
 \eta_2 = -\frac{\pi i}{2\ell},
 \end{align}
 and from \rf{Psi} and \rf{delta} we obtain
 \begin{align}
 {\bm \Psi} &= \frac{\pi}{2} \I, \quad
 \delta = \frac{2\gamma a^2}{\ell^2}. 
\end{align}
\begin{figure}
\includegraphics[width=0.4\textwidth, angle=0]{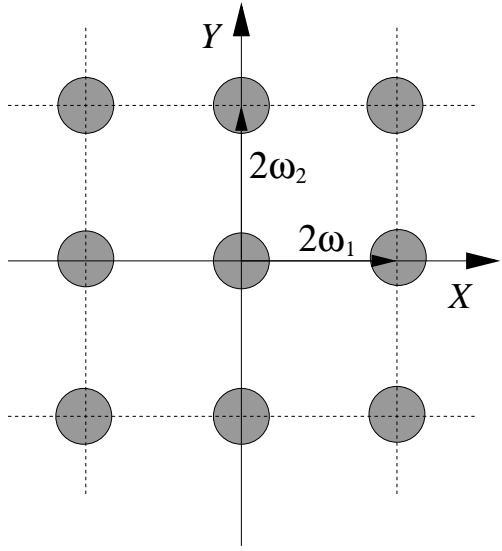}
\caption{Square lattice of inclusions of radii $a$ with periods $2\omo=\ell$ and $2\omt=i\ell$.} 
\label{fig2}
\end{figure}
All lattice sums \rf{gk} are real with the only non-zero being $S_{2k}$, $k=1,2,\ldots$.
Substituting these parameters into \rf{sig2} we obtain the effective conductivity tensor of the square lattice
\begin{equation}
 \Bs^\ast = \se \Bigl\{ \I + 2\sum_{n=0}^\infty \left( \gamma f \M \right)^{n+1} \Bigr\},
 \label{sig3}
\end{equation}
where $\dst f=\frac{\pi a^2}{\ell^2}$ is the volume fraction of the inclusions. Calculation of matrix $\M$ in \rf{M} gives
\begin{equation}
 \M = \left( 1 + 3\gamma^2 S_2^2 h^8 + \left(9\gamma^4 S_2^4 + 7\gamma^2 S_4^2 \right)h^{16} \right) \I + O(h^{24}).
\end{equation}
Effective conductivity tensor of the square lattice is isotropic, $\Bs^\ast = \sigma^\ast \I$, and for
$\sigma^\ast$ we obtain from \rf{sig3}
\begin{align}
 \sigma^\ast &=\se \Biggl(1+2\gamma f+2\gamma^2 f^2+2\gamma^3 f^3+2\gamma^4 f^4 \nonu\\[2mm]
 &+2\left( \gamma^5 +\frac{3 \gamma^3 S_2^2}{\pi^4} \right) f^5 
 +2\left( \gamma^6 +\frac{6\gamma^4 S_2^2}{\pi^4} \right) f^6 \nonu\\[2mm]
 &+2\left( \gamma^7 +\frac{9\gamma^5 S_2^2}{\pi^4} \right) f^7
 +2\left( \gamma^8 +\frac{12\gamma^6 S_2^2}{\pi^4} \right) f^8 \nonu \\[2mm]
 &+2\left( \gamma^9 +\frac{9\gamma^5 S_2^4 + 15\pi^4 \gamma^7 S_2^2 + 7\gamma^3 S_4^2}{\pi^8}\right) f^9 \nonu \\[2mm]
 &+2\left( \gamma^{10} +\frac{27\gamma^6 S_2^4 + 18\pi^4 \gamma^8 S_2^2 + 14\gamma^4 S_4^2}{\pi^8}\right) f^{10} \nonu \\[2mm]
 &+2\left( \gamma^{11} +\frac{54\gamma^7 S_2^4 + 21\pi^4 \gamma^9 S_2^2 + 21\gamma^5 S_4^2}{\pi^8}\right) f^{11} \nonu \\[2mm]
 &+2\left( \gamma^{12} +\frac{90\gamma^8 S_2^4 + 24\pi^4 \gamma^{10} S_2^2 + 28\gamma^6 S_4^2}{\pi^8}\right) f^{12} \nonu\\[2mm]
 &+ O(f^{13}).
 \label{square}
\end{align}
Here $\dst S_2 = {\sum_{n,m}}^\prime \frac{1}{(m+in)^4} = 3.15121$, $\dst S_4 = {\sum_{n,m}}^\prime \frac{1}{(m+in)^8}=4.25577$ correct to five decimal places.
Expression \rf{square} is in agreement with known results \cite{R:92,PMM:79}.

\subsection{Regular triangular lattice}

The effective conductivity tensor is also isotropic in the case of a regular triangular lattice (see Figure \ref{fig3}).
Similar to the previous case  we put $\dst 2\omo = \ell,\; 2\omt=\ell e^{\pi i/3}$. Then we find \cite{AS:65} that
\begin{align}
 \eta_1 &= \frac{\pi}{\ell \sqrt{3}}, \quad
 \eta_2 = \frac{\pi e^{\pi i/3}}{\ell \sqrt{3}},
\end{align}
and as a result,
\begin{align}
 {\bm \Psi} &= \frac{\pi}{2} \I, \quad
 \delta = \frac{2\gamma a^2}{S}. 
\end{align}
\begin{figure}
\includegraphics[width=0.6\textwidth, angle=0]{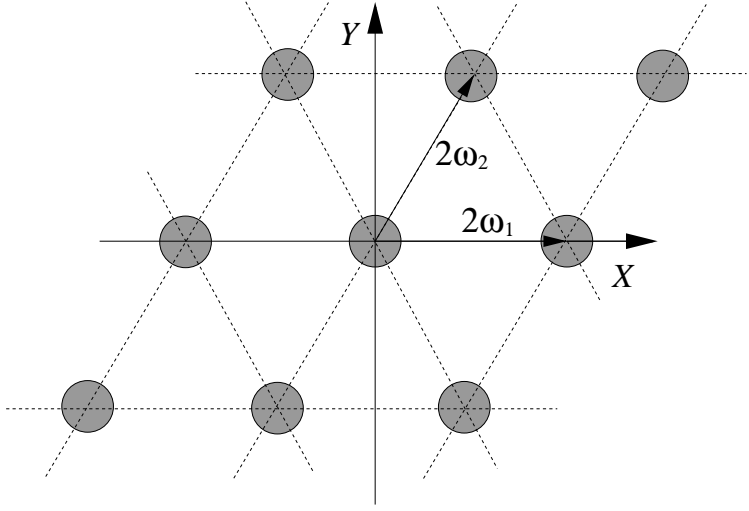}
\caption{Regular triangular lattice of inclusions of radii $a$ with periods $2\omo=\ell$ and $2\omt=\ell e^{i\pi/3}$.} 
\label{fig3}
\end{figure}
All lattice sums \rf{gk} are real with the only non-zero being $S_{3k}$, $k=1,2,\ldots$.
Substituting these parameters into \rf{sig2} we obtain the effective conductivity tensor of the regular triangular lattice
\begin{equation}
 \Bs^\ast = \se \Bigl\{ \I + 2\sum_{n=0}^\infty \left( \gamma f \M \right)^{n+1} \Bigr\},
 \label{sig4}
\end{equation}
where $\dst f=\frac{\pi a^2}{S}$ is the fractional part of the inclusions. Matrix $\M$ found
from \rf{M} is
\begin{equation}
 \M (h) = \left( 1+5\gamma^2 S_3^2 h^{12} \right) \I + O \left(h^{24}\right).
 \label{}
\end{equation}

The effective conductivity tensor of the regular triangular lattice is isotropic, $\Bs^\ast = \sigma^\ast \I$, and for
$\sigma^\ast$ we obtain from \rf{sig4}
\begin{align}
 \sigma^\ast &=\se \Biggl(1+2\gamma f+2\gamma^2 f^2+2\gamma^3 f^3+2\gamma^4 f^4
 +2\gamma^5 f^5 \nonu \\[2mm]
 &+2\gamma^6 f^6 
 +\left( 2\gamma^7 +\frac{135\gamma^3 S_3^2}{32\pi^6} \right) f^7 \nonu \\[2mm]
 &+\left( 2\gamma^8 +\frac{135\gamma^4 S_3^2}{16\pi^6} \right) f^8 
 +\left( 2\gamma^9 +\frac{405\gamma^5 S_3^2}{32\pi^6}\right) f^9 \nonu \\[2mm]
 &+\left( 2\gamma^{10} +\frac{135\gamma^6 S_3^2}{8\pi^6}\right) f^{10} 
 +\left( 2\gamma^{11} +\frac{675\gamma^7 S_3^2}{32\pi^6}\right) f^{11} \nonu \\[2mm]
 &+ \left( 2\gamma^{12} +\frac{405\gamma^8 S_3^2}{16\pi^6}\right) f^{12} \Biggr)
 + O\left(f^{13}\right). 
\end{align}
The latter formula agrees with calculation in Ref. \onlinecite{PMM:79}.
Here $\dst S_3 = {\sum_{n,m}}^\prime \frac{1}{\left(m+ne^\frac{\pi i}{3}\right)^6}=5.86303$ correct to five decimal places.

\subsection{Rectangular lattice}

Consider a rectangular lattice generated by the vectors $2\omo = 2\ell,\; 2\omt=i\ell$. We compute the lattice sums
\begin{align}
 S_2 & = {\sum_{n,m}}^\prime \frac{\ell}{\left(2\omo m+2\omt n\right)^4}
 = {\sum_{n,m}}^\prime \frac{1}{\left(2m+ in\right)^4} = 2.16646, \\[2mm] 
% 2.166458245576e+00
 S_3 &= {\sum_{n,m}}^\prime \frac{1}{\left(2m+ in \right)^6} =-2.03111.
 % -2.031109506261
\end{align}
\begin{figure}
\includegraphics[width=0.6\textwidth, angle=0]{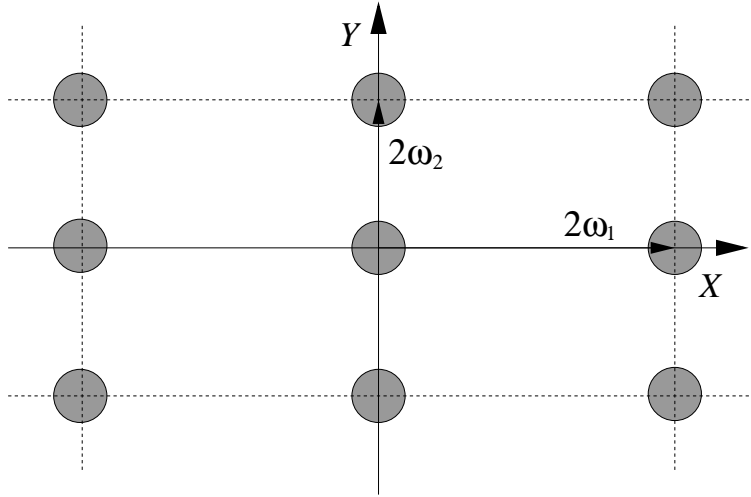}
\caption{Rectangular lattice of inclusions of radii $a$ with periods $2\omo=2\ell$ and $2\omt=i\ell$.} 
\label{fig4}
\end{figure}
Next, we compute the invariants $g_2$ and $g_3$:
\begin{align}
 g_2 &= 60\,S_2 \ell^{-4}= 129.988\ell^{-4}, \\[2mm]
 g_3 &= 140\,S_3\ell^{-6} = -284.355\ell^{-6}.
\end{align}
%129.9874947345600
%-284.3553308765400
Then we compute $\eta_1$ from \rf{q1} using \rf{homo}
\begin{align}
 \eta_1 &= \zeta(\omo; g_2,g_3) = \ell^{-1} \zeta \left(1; 129.988, -284.355\right) \nonu \\[2mm]
 & =-0.14800 \ell^{-1} %-.148000108737567
\end{align}
and matrices ${\bm \Psi}$ and $\M$
\begin{align}
 {\bm \Psi} &= \left[
 \begin{array}{cc}
  -0.14800 & 0 \\[2mm]
  0 & 3.28959
 \end{array}
 \right],
 %3.28959276232736
\end{align}
\begin{align}
 \M &= \left(1 +3\gamma^2 S_2^2 h^8 + 5\alpha^2 S_3^2 h^{12}\right) \I
 + 30 \gamma^3 S_2^2 S_3 h^{14} 
 \left[
 \begin{array}{cc}
  1 & 0 \\[2mm]
  0 & -1
 \end{array}
 \right] + O(h^{16}).
\end{align}
From \rf{sig2} we obtain that the effective conductivity tensor ${\bm \sigma}^\ast$ is diagonal 
\begin{align}
 {\bm \sigma}^\ast = \left[
 \begin{array}{cc}
  \sigma_x^\ast & 0 \\[2mm]
  0 & \sigma_y^\ast
 \end{array}
\right]
\end{align}
with components
\begin{align}
 \sigma_x^\ast &= \sigma_{ex} \left(1 + 2\gamma f -0.188439 \gamma^2 f^2 +0.0177547 \gamma^3 f^3 -0.00167284 \gamma^4 f^4 + O \left(f^5 \right)  \right), \\[2mm]
 \sigma_y^\ast &= \sigma_{ex} \left( 1 + 2 \gamma f +4.18844 \gamma^2 f^2 + 8.77150 \gamma^3 f^3 +18.3694 \gamma^4 f^4 + O \left(f^5 \right)  \right),
\end{align}
where $f$ is the volume fraction of the cylinders $f=\frac{1}{2}\pi h^2$.

\subsection{Anisotropic lattice}

Here we show how to find the effective conductivity tensor for an arbitrary lattice. 
Consider the case when the lattice is created by the vectors $2\omo = 2\ell,\; 2\omt=\ell e^{\pi i/3}$ (see Figure \ref{fig5}). Then we calculate the lattice sums
\begin{align}
 S_2 & = {\sum_{n,m}}^\prime \frac{\ell^4}{\left(2\omo m+2\omt n\right)^4}
 = {\sum_{n,m}}^\prime \frac{1}{\left(2m+ e^\frac{\pi i}{3}n\right)^4} =
 -1.08720+1.88309i, \\[2mm] 
 S_3 &= {\sum_{n,m}}^\prime \frac{\ell^6}{\left(2m+ e^\frac{\pi i}{3}n\right)^6} = 2.01542,
 % g2 := -1.087201928188+1.883088969732*I; g3 := 2.015417144615
\end{align}
\begin{figure}
\includegraphics[width=0.75\textwidth, angle=0]{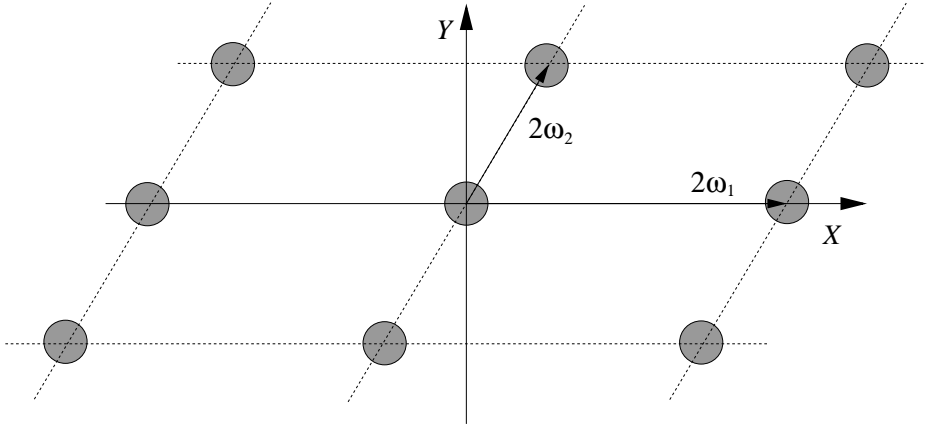}
\caption{Generic lattice of inclusions of radii $a$ with periods $2\omo=2\ell$ and $2\omt=\ell e^{\pi i/3}$.} 
\label{fig5}
\end{figure}
and the invariants
\begin{align}
 g_2 &= 60 S_2\,\ell^{-4} =\left( -65.2321 + 112.985 i \right) \ell^{-4},\\[2mm]
 g_3 &= 140 S_3 \,\ell^{-6}= 282.158 \, \ell^{-6}.
 %      -65.23211568 + 112.9853382 I, 282.1584003
\end{align}
In order to find constant $\eta_1$ we use the homogeneity property of the $\zeta$-function \rf{homo}
\begin{align}
 \eta_1 &= \zeta(\omo, g_2,g_3) = \ell^{-1}\zeta \left(1, -65.2321 + 112.985 i, 282.158 \right) \nonu \\[2mm]
 & = (1.07651-1.27703i)\ell^{-1}.
\end{align}
Now we evaluate matrix ${\bm \Psi}$ in \rf{Psi}
\begin{align}
 {\bm \Psi} &= \left[
 \begin{array}{rr}
  \re \eta_1 \im 2\omt & -\im \eta_1 \im 2\omt\\[2mm]
  -\im \eta_1 \im 2\omt & \pi -\re \eta_1 \im 2\omt
 \end{array}
 \right] =  \left[
 \begin{array}{rr}
  0.93228  & 1.10594\\[2mm]
  1.10594 & 2.20931
 \end{array}
 \right],
\end{align}
and using the expansion of $\M(h)$ in \rf{M} we compute from \rf{sig2} the effective conductivity tensor 
\begin{align}
 {\bm \sigma}^\ast = \left[
 \begin{array}{cc}
  \sigma_x^\ast & \sigma_{xy}^\ast \\[2mm]
  \sigma_{xy}^\ast & \sigma_y^\ast
 \end{array}
\right],
\end{align}
where
\begin{align}
 \sigma_{x}^\ast &= \sigma_{ex} \left(1 + 2\gamma f +1.18702 \gamma^2 f^2 +1.69592 \gamma^3 f^3 + 2.98936 \gamma^4 f^4 + O \left(f^5 \right)  \right), \\[2mm]
 \sigma_{xy}^\ast &= \sigma_{ex} \left(1.40813 \gamma^2 f^2 +2.81625 \gamma^3 f^3 + 5.15506 \gamma^4 f^4 + O \left(f^5 \right)  \right), \\[2mm]
 \sigma_y^\ast &= \sigma_{ex} \left( 1 + 2 \gamma f +2.81298 \gamma^2 f^2 + 4.94784 \gamma^3 f^3 + 8.94191 \gamma^4 f^4 + O \left(f^5 \right)  \right). 
\end{align}

\section*{Appendix}
\setcounter{equation}{0}
\renewcommand{\theequation}{A\arabic{equation}}

Here we study properties of system \rf{vec_sys} using the approach similar to that in Ref.~\onlinecite{GZ88}. 
We seek a solution of \rf{vec_sys} in the space of bounded sequences $l_{\infty} ({\R}^{2})$ whose elements are two-dimensional vectors
$${\bm x}=\{{\bm x}_{n}\}_{n=0}^{\infty}=\left[{\bm x}_{0}, {\bm x}_{1},\ldots,
{\bm x}_{n},\ldots\right]^\tr,$$
where ${\bm x}_{n}=[\scal{x}{n}{1},\scal{x}{n}{2}]^\tr$.
\noindent
The norm of an element of this space is given by
$$\|\bm{x}\| =\sup_{n=0,1,2,\ldots}\;\max
\left\{ | \scal{x}{n}{1} |, | \scal{x}{n}{2} |
\right\}. $$
We introduce a linear operator ${\G}\,(h)$ by
\begin{equation}
 \left({\G}\,(h) \x\right)_n = \sum_{m=0}^\infty  {\bm G}_{n,m}\,{\x}_m \,h^{2n+2m+2},
 \quad n = 0,1,2,\ldots.
 \label{Gop}
\end{equation}
Then \rf{vec_sys} can be written in operator form
\begin{equation}
 \x - {\G}\,(h) \x = {\bm y}.
 \label{oper}
\end{equation}
Properties of operator $\G\,(h)$ and equation \rf{oper} are summarized in the following
\begin{theorem}
For each $0\leq h \leq \frac{1}{2}$ $\G\,(h)$ is a bounded operator in $l_{\infty} ({\R}^{2})$.
If $0\leq h < \frac{1}{2}$ then $\G\,(h)$ is compact and can be represented by a convergent series
\begin{equation}
 \G\,(h) = \sum_{m=1}^\infty h^{2m} {\bm G}^{(m)},
 \label{Gexp}
\end{equation}
where ${\bm G}^{(m)}$ are finite-dimensional operators of order $2m$.
\end{theorem}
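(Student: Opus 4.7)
The plan is to first produce a uniform upper bound for the matrix norms $\|\bm{G}_{n,m}\|$, then control the series defining $\G(h)\x$ componentwise so as to obtain boundedness on $l_{\infty}(\R^{2})$ for $h \leq \tfrac{1}{2}$, and finally regroup the double series for $\G(h)$ by the total power of $h$ to display it as a norm-convergent sum of finite-rank operators whenever $h < \tfrac{1}{2}$, from which compactness follows by a standard closure argument.

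For the first step I would use the observation made in the text after \rf{sk}: for a fixed lattice the sums $|S_k|$ remain bounded as $k\to\infty$, so there is a constant $M$ with $|S_{k}| \leq M$ for every $k \geq 2$. Combined with the explicit form \rf{Gnm} this yields $\|\bm{G}_{n,m}\| \leq C|\gamma|\binom{2n+2m+1}{2m}$ with $C$ depending only on the lattice. To estimate $\|(\G(h)\x)_n\|$ I would then invoke the generating-function identity
\[
\sum_{m=0}^{\infty}\binom{2n+2m+1}{2m}\,t^{2m} = \tfrac{1}{2}\bigl[(1-t)^{-(2n+2)} + (1+t)^{-(2n+2)}\bigr],
\]
obtained by retaining the even powers in the binomial expansions of $(1\pm t)^{-(2n+2)}$. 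Setting $t=h$ and multiplying through by $h^{2n+2}$ converts the required estimate into
\[
\sum_{m=0}^{\infty}\binom{2n+2m+1}{2m}\,h^{2n+2m+2} = \tfrac{1}{2}\!\left[\Bigl(\tfrac{h}{1-h}\Bigr)^{2n+2} + \Bigl(\tfrac{h}{1+h}\Bigr)^{2n+2}\right],
\]
which is bounded by $1$ uniformly in $n$ for $h\leq\tfrac{1}{2}$. Hence $\G(h)$ is bounded on $l_{\infty}(\R^{2})$ with $\|\G(h)\| \leq C|\gamma|$.

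For the expansion \rf{Gexp}, I would collect the terms in \rf{Gop} by their total $h$-power: the summand $\bm{G}_{n,m}h^{2n+2m+2}$ contributes to $h^{2k}$ precisely when $m = k-n-1$, so
\[
(\bm{G}^{(k)}\x)_n = \begin{cases} \bm{G}_{n,k-n-1}\,\x_{k-n-1}, & 0 \leq n \leq k-1, \\ 0, & n \geq k. \end{cases}
\]
The range of $\bm{G}^{(k)}$ lies in the space of sequences supported on the first $k$ indices, so $\bm{G}^{(k)}$ is finite-dimensional of rank at most $2k$. Combining the operator-norm bound $\|\bm{G}^{(k)}\| \leq C|\gamma|\max_{0\leq n\leq k-1}\binom{2k-1}{2n+1} \leq C|\gamma|\binom{2k-1}{k}$ with the asymptotic $\binom{2k-1}{k} \sim 4^{k}/(2\sqrt{\pi k})$ gives $\|h^{2k}\bm{G}^{(k)}\| \leq C'|\gamma|(4h^{2})^{k}/\sqrt{k}$. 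For $h<\tfrac{1}{2}$ this is summable, so the series \rf{Gexp} converges absolutely in operator norm. Since the compact operators form a closed subspace of the bounded operators and each $\bm{G}^{(k)}$ is compact as a finite-rank operator, $\G(h)$ is compact.

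The main obstacle is the endpoint $h=\tfrac{1}{2}$: the crude estimate $\binom{2n+2m+1}{2m} \leq 2^{2n+2m+1}$ produces a geometric series in $4h^{2}$ that diverges exactly at $h=\tfrac{1}{2}$, so obtaining uniform-in-$n$ boundedness at the boundary genuinely requires the exact generating-function identity above. Consistently, at $h=\tfrac{1}{2}$ the finite-rank series \rf{Gexp} ceases to be norm-convergent, and compactness fails.
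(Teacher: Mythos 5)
Your proof is correct and follows essentially the same route as the paper: the same bound on $\|{\bm G}_{n,m}\|$ via the uniform boundedness of the lattice sums, the same closed-form evaluation $\tfrac{1}{2}\bigl[(h/(1-h))^{2n+2}+(h/(1+h))^{2n+2}\bigr]$ of the binomial series (the paper reaches it by differentiating a geometric series rather than by taking the even part of $(1\pm h)^{-(2n+2)}$, and your indexing of the finite-rank pieces ${\bm G}^{(m)}$ is actually the corrected version of the paper's), and the same regrouping by total power of $h$ with the central-binomial norm bound, concluding compactness as a norm limit of finite-rank operators. The only point to flag is your closing assertion that compactness \emph{fails} at $h=\tfrac{1}{2}$: the divergence of this particular finite-rank series does not establish non-compactness, and the theorem does not require that claim.
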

\begin{proof}
Let us estimate the norm of $\G\, (h)$: 
\begin{align}
 \|\G\, (h) \| &= \sup_{\substack{\| \x \| \leq 1 \\ \x \neq {\bm 0}}} \frac{\|\G\, (h) \,\x \|}{\|\x\|} = \sup_{\substack{\| \x \| \leq 1 \\ \x \neq {\bm 0}}} \sup_{n}\frac{\| \sum_{m=0}^\infty  {\bm G}_{n,m}\,{\x}_m \,h^{2n+2m+2} \|}{\|\x\|} \nonu \\[2mm]
 & \leq \sup_{n} \sum_{m=0}^\infty \| {\bm G}_{n,m}\| \,h^{2n+2m+2} 
 \leq |\gamma| \sup_{n} \left( |S_n^R| + |S_n^I| \right) \nonu \\[2mm]
 &\times \sup_{n} \sum_{m=0}^\infty
 \frac{(2n+2m+1)!}{(2m)!\,(2n+1)!}\, h^{2n+2m+2} 
 = |\gamma|\,  \widetilde{S}\, \sup_{n}
 \frac{ h^{2n+2}}{(2n+1)!} \frac{d^{2n+1}}{dh^{2n+1}} \sum_{m=0}^\infty h^{2m+2n+1} \nonu \\[2mm]
 &= |\gamma|  \,  \widetilde{S}\, \sup_{n} \frac{ h^{2n+2}}{(2n+1)!}
 \frac{d^{2n+1}}{dh^{2n+1}} \left(-\frac{1}{1-h}-\frac{1}{1+h} \right) \nonu \\[2mm]
 &=|\gamma| \,  \widetilde{S}\, \sup_{n}
 \left(\left(\frac{h}{1-h}\right)^{2n+2}+\left(\frac{h}{1+h}\right)^{2n+2} \right) \nonu \\[2mm]
 &= |\gamma| \,  \widetilde{S}\left(\left(\frac{h}{1-h}\right)^{4}+\left(\frac{h}{1+h}\right)^{4} \right) < \infty,
 \label{Gnorm}
\end{align}
where $\widetilde{S} = \sup_{n} \left( |S_n^R| + |S_n^I| \right)$. Therefore $\G\,(h)$ is a bounded operator for $0 \leq h \leq \frac{1}{2}$. From \rf{Gnorm} it
also follows that if $0 \leq h < \frac{1}{2}$ than $\G\,(h)$ maps a bounded sequence into the space $c_0 (\R^2)$ of sequences converging to zero and hence it is compact.

Expansion \rf{Gexp} follows formally from the definition \rf{Gop} of operator $\G\,(h)$,
where $2m$-dimensional operators ${\bm G}^{(m)}$ are defined by
\begin{equation}
 \left( {\bm G}^{(m)} {\bm x} \right)_k = \left\{
 \begin{array}{ll}
  {\bm G}_{m-k-1,k} \, \x_k, & 0 \leq k \leq m-1, \\[2mm]
  0, & k \geq m.
 \end{array}
\right.
\end{equation}
To show convergence of the series \rf{Gexp} we observe that
\begin{align}
 \|  {\bm G}^{(m)} \| &\leq \max_{0\leq k \leq m-1} \| {\bm G}_{m-k-1,k} \|
 = |\gamma| \frac{(2m-1)!}{(2k)!\,(2m-2k-1)!} \left( \left| S_m^R \right| + \left| S_m^I \right|
 \right) \nonu \\[2mm]
 & \leq |\gamma| \widetilde{S}\, \frac{(2m-1)!}{m!\,(m-1)!}.
\end{align}
Therefore series \rf{Gexp} is dominated by a convergent for $0 \leq h < \frac{1}{2}$ series
\begin{align}
 \sum_{m=1}^\infty \|{\bm G}^{(m)}\|  h^{2m} \leq |\gamma| \widetilde{S} \sum_{m=1}^\infty \frac{(2m-1)!}{m!\,(m-1)!} h^{2m}=\frac{2|\gamma|h^2 \widetilde{S}}{\sqrt{1-4h^2} \left( 1+\sqrt{1-4h^2} \right)}.
\end{align}

For values of $\gamma$, $\widetilde{S}$, and $h$ such that $\| \G\, (h)\| < 1$ in \rf{Gnorm}
the fixed point theorem for contraction operators on Banach spaces ensures the following properties
of the solution of \rf{vec_sys}:
\begin{itemize}
 \item[(a)] Equation \rf{vec_sys} has a unique solution ${\bm x}_0 \in c_0 (\R^2)$.
 \item[(b)] The truncated solution of \rf{vec_sys} converges exponentially to ${\bm x}_0$.
 \item[(c)] The solution of \rf{vec_sys} can be represented as a convergent power series in $h$.
\end{itemize}
\end{proof}


\begin{thebibliography}{99}

\bibitem{R:92}
L. Rayleigh,
``On the influence of obstacles arranged in rectangular order upon the
  properties of a medium.''
Phil. Mag. {\bf 34}, 481--502 (1892)

\bibitem{PMM:79}
W.~T. Perrins, D.~R. McKenzie, and R.~C. McPhedran,
``Transport properties of regular arrays of cylinders.''
Proc. R. Soc. Lond. A {\bf 369}, 207--225 (1979)

\bibitem{Mc:86}
R.~C. McPhedran,
``Transport properties of cylinder pairs and of the square array of
  cylinders.''
Proc. R. Soc. Lond. A {\bf 408}, 31--43 (1986)

\bibitem{Mi:97}
V.~V. Mityushev,
``Transport properties of double-periodic arrays of circular cylinders.''
Z. Angew. Math. Mech. {\bf 77}, 115--120 (1997)

\bibitem{Ryl:00}
N.~Rylko,
``Transport properties of the rectangular array of highly conducting
  cylinders.''
J. Engineering Math. {\bf 38}, 1--12 (2000)

\bibitem{BK:01}
B.Y. Balagurov and V.A. Kashin,
``The conductivity of a 2d system with a doubly periodic arrangement of
  circular inclusions.''
Technical Physics  {\bf 46}, 101--106 (2001)

\bibitem{M:04}
G.~W. Milton,
{\em The Theory of Composites}
(Cambridge University Press, 2002)

\bibitem{GF:70}
E.~I. Grigolyuk and L.~A. Filshtinsky,
{\em Perforated plates and shells}
(Nauka, Moscow, Russia, 1970)

\bibitem{AS:65}
M.~Abramowitz and I.~A. Stegun,
{\em Handbook of Mathematical Functions: with Formulas, Graphs, and
  Mathematical Tables}
(Dover, New York, 1965)

\bibitem{BE:53}
H.~Bateman and A.~Erd\'{e}lyi,
{\em Higher Transcendental Functions}, Vol.~3
(McGraw-Hill, New York, NY, 1953)

\bibitem{GZ88}
Yu.~A. Godin and A.~S. Zil'bergleit,
``Coefficients of capacitance of an axisymmetric system of spherical
  conductors.'' Sov. Phys. Tech. Phys {\bf 33}, 999--1002 (1988)

\end{thebibliography}
\end{document}